\newtheorem{thm}{Theorem}[section]
\newtheorem{lem}[thm]{Lemma}
\newtheorem{defn}[thm]{Definition}
\newtheorem{cor}[thm]{Corollary}
\newtheorem{prop}[thm]{Proposition}
\begin{document}

\title{Monogamy of quantum entanglement in time}
\author{Marcin Nowakowski\footnote{Electronic address: mnowakowski@mif.pg.gda.pl}}
\affiliation{Faculty of Applied Physics and Mathematics,
~Gdansk University of Technology, 80-952 Gdansk, Poland}
\affiliation{National Quantum Information Center of Gdansk, Andersa 27, 81-824 Sopot, Poland}

\pacs{03.67.-a, 03.67.Hk}

\begin{abstract}
In this paper we state a fundamental question about the structure of correlations in time and analyze temporal monogamy relations.
We show that the nature of temporal correlations is inherently different from the spatial ones but in similarity to quantum spatial correlations, we expose a phenomenon of monogamy of quantum entanglement in time. We perform this task applying the entangled histories framework as a modification of the consistent histories approach. These considerations are supported by introduction of necessary tools specific for the tensor algebra used for representation of
spatial correlations. We show that Tsirelson bound on temporal Bell-like inequalities can be derived from the entangled histories approach.  Finally, we point out that in a context of the tensor algebra used for linking states in different times further studies on mathematical structure of the state representing evolving systems are needed.
\end{abstract}

\maketitle

\section{Introduction}

Recent years have proved a great interest of quantum entanglement monogamy concept showing its usability in quantum communication theory, especially in domain of one-way communication and its applications to quantum secure key generation \cite{Lutk1,Lutk2,Lutk3,Lutk4,Devetak05,KLi,MNPH,MNPH2}. While spatial quantum correlations and especially their non-locality became a central subject of quantum information theory and their applications to quantum computation, potentiality of application of temporal non-local correlations is poorly analyzed. The crucial issue relates to the very nature of time and temporal correlations phenomenon with their understanding within the framework of modern quantum and relativistic theories.

Non-local nature of quantum correlations in space has been accepted as a consequence of violation of local realism, expressed in Bell's theorem \cite{Bell} and analyzed in many experiments \cite{Aspect, Freedman}. As an analogy for a temporal domain, violation of macro-realism \cite{LGI2} and Legett-Garg inequalities \cite{LGI} seem to indicate non-local effects in time and are a subject of many experimental considerations \cite{EX1, EX2, EX3, EX4}. However, the open problem relates to the mathematical structures that could represent quantum states correlated in time in similarity to multipartite quantum states in space. In this paper we analyze a variation of the consistent histories approach \cite{RG1, RG2, RG3, RG4} with a concept of entangled histories \cite{CJ1,CJ2} built on a tensor product of projective Hilbert spaces that can be considered as a potential candidate of mathematical structures representing quantum states correlated in time. In particular, we focus on showing that entangled histories demonstrate monogamous properties reflecting the phenomenon in case of spatial quantum entanglement.  It is worth mentioning that the two-state-vector formalism (TSVF) \cite{Vaidman} brings another perspective on representation of quantum correlations in time broadly discussed in the literature.

However, it is crucial to note that in this context many 'obvious' facts about structure and behavior of spatial correlations and tensor algebra of spatial quantum states cannot be easily transferred into the temporal domain as the tensor structure of temporal correlations is richer due to the binding evolution between instances of 'time' and the observation-measurement phenomenon that is also a subject of this paper.

The outline of this paper is as follows: in section I, we present the well known concepts of consistent histories approach \cite{RG3} and present new concepts of entangled histories \cite{WC1, WC2} which are substantial for further considerations on monogamies and entanglement in time as such. In section II, we introduce partial trace on quantum histories and show that quantum entanglement in time is monogamous for a particular history. This section considers also this property from a perspective of the Feynman's path integral approach. In section III, the Tsirelson bound on quantum correlations in time is derived from the entangled histories.

We believe that further research on temporal correlations and time evolution will be substantial for development of quantum information theory including applications to quantum cryptography or quantum computation but also to quantum gravity theory.

\section{Entangled Consistent Histories}

The consistent histories approach has a long tradition \cite{RG1,RG2, RG3,RG4,RG5, RG6, CJ1, CJ2} and as such resolves many quantum 'paradoxes' but is also a subject of many open discussions.
For readers interested in deepening the subject of consistent histories, it might be useful to refer to the literature \cite{RG3}. In this section we give a short introduction with necessary tools for further reading of the paper but
also introduce a new concept of monogamy of correlations in time from the consistent histories perspective. Partial trace operator on $\mathcal{C}^{*}$-Algebra of history operators is introduced as a tool necessary for analysis of reduced histories justifying its consistency with the Feynman's path integral approach \cite{Feynman}.

It is substantial to notice that for a temporally evolving system we can ask questions about its states probing the system in different
times $t_{1}<t_{2}<...<t_{n}$ that is performed in reference to the measuring device. We could interpret that during this process we project the state of the system onto the n-fold tensor product $\bigodot_{i=1}^{n}P_{i}$ achieving
a consistent wave function which can be used to deduce probabilities of the events where a history of a physical system is defined by means of the product $\bigodot_{i=1}^{n}P_{i}$ of local projectors $P_{i}$ acting on the Hilbert space
$\mathcal{H}=\mathcal{H}_{t_{n}}\otimes\dots\otimes\mathcal{H}_{t_{1}}$ in which
the particular states live. The projectors in a given history represent then potential properties of the system it had in analyzed times. It means that we build a tensor algebra from temporally local frames (identified by pointer $t_{i}$) analyzing the global state of evolving system in local Hilbert spaces $H_{i}$. Thus, one can interpret a history $Proj(\mathcal{H})\ni |H)=P_{n}^{1}\odot P_{n-1}^{1}\odot\ldots\odot P_{1}^{1}$ as a potential evolution where the system has a property $P_{i}$ in time $t_{i}$ \cite{RG3}. To have a physical sense, it means that in principle it should be possible to define an 'observable' for such a history state that is defined as  $\mathcal{O}=|H)(H|$ \cite{WC1,WC2}.
Naturally, it is still an open question what mathematical structure would be sufficient to describe 'atemporal state' of the system
in any reference frame with which one can associate all possible physical history observables.

The fundamental tool introduced in the consistent history framework which connects different frames is the bridging operator \cite{RG1} $\mathcal{B}(t_{2},t_{1})$. It is a counterpart of an unitary evolution operation having the following properties:
\begin{eqnarray}
\mathcal{B}(t_{2},t_{1})^{\dag}&=&\mathcal{B}(t_{1},t_{2}) \\
\mathcal{B}(t_{3},t_{2})\mathcal{B}(t_{2},t_{1})&=&\mathcal{B}(t_{3},t_{1})
\end{eqnarray}
and can be represented for unitary quantum evolution as $\mathcal{B}(t_{2},t_{1})=\exp(-iH(t_{2}-t_{1}))$.

Now for the sample space of consistent histories $|H^{\alpha})=P_{n}^{1}\odot P_{n-1}^{\alpha}\odot\ldots\odot P_{1}^{\alpha}\odot P_{0}^{\alpha}$ where $\sum_{\alpha}|H^{\alpha})=id$, the formalism introduces the chain operator $K(|H^{\alpha}))$ which will be fundamental for answering the questions about probabilities of realization of such a history that could be assigned to a particular history via the Born rule:
\begin{equation}
K(|H^{\alpha}))=P_{n}^{\alpha}\mathcal{B}(t_{n},t_{n-1}) P_{n-1}^{\alpha}\ldots\mathcal{B}(t_{2},t_{1})P_{1}^{\alpha}\mathcal{B}(t_{1},t_{0}) P_{0}^{\alpha}
\end{equation}
Further, equipped with this operator one can associate a history $|H^{\alpha})$ with its weight $W(|H^{\alpha}))=TrK(|H^{\alpha}))^{\dagger}K(|H^{\alpha}))$ being by Born rule a counterpart of relative probabilities.
The histories framework requires additionally that the family of histories is consistent, i.e. one can associate with a union of histories a weight equal to the sum of weights
associated with particular histories included in the union. This implies the following \textit{consistency condition}:
\begin{equation}
 \left\lbrace
           \begin{array}{l}
(H^{\alpha}|H^{\beta})\equiv TrK(|H^{\alpha}))^{\dagger}K(|H^{\beta}))=0 \; for\; \alpha\neq\beta \\
 (H^{\alpha}|H^{\beta})=0 \; or \; 1 \\
 \sum_{\alpha}c_{\alpha}|H^{\alpha})=I\; for\; complex\; c_{\alpha} \in \mathcal{C} \\
\end{array}
         \right.
\end{equation}
If the observed system starts its potential history in a pure state
$P_{t_{0}}=|\Psi_{0}\rangle\langle\Psi_{0}|$, then a consistent set of its histories create a tree-like structure (Fig. 1). The consistency condition implies that the tree branches are mutually orthogonal. It is helpful also to assume normalization of histories with non-zero weight which enables normalization of probability distributions for history events, i.e.: $|\widetilde{H})=\frac{|H)}{\sqrt{(H|H)}}$.

The consistent history framework does not consider non-locality in space or time as such \cite{RG7}, however, since the space of histories spans the complex vector space, we can consider
complex combinations of history vectors, i.e. any history can be represented as $|\Psi)=\sum_{i}\alpha_{i}|H^{i})$ \cite{WC1} where $\alpha_{i} \in \mathbb{C}$ and $\mathcal{F} \ni |H^{i})$ represents a consistent family of histories which is actually a complex extension of the framework built by R. Griffiths.
Having defined above, the histories space can be also equipped with an inner semi-definite product \cite{RG1} between any two histories $|\Psi)$ and $|\Phi)$ as $(\Psi|\Phi)=Tr [K(|\Psi))^{\dagger}K(|\Phi))]$.
It is worth mentioning that recently \cite{WC1, WC2} the concept of Bell-like tests have been proposed for experimental analysis of entangled histories.

It is fundamental to note that a history $|H^{\alpha})$ can be consistent or inconsistent (physically not realizable) basing on the associated evolution $T$ of the system \cite{RG3} as its consistency is verified by means
of the aforementioned inner product engaging bridging operators. Thus, a temporal history is always associated with evolution and one actually always should consider a pair of a family of histories and the bridging operators  $\{\mathcal{F}, T\}$ for the physical system. This is the first key feature differentiating vectors representing spatial quantum states (as objects from Hilbert or Banach space for mixed states) and temporal vectors representing histories.
Whenever we analyze features of a spatial quantum state, it is assumed that all necessary knowledge is hidden in the vector $|\psi\rangle$ so actually we analyze only one-element history objects $[\psi]=|\psi\rangle\langle\psi|$
from a perspective of a temporal local frame without a concept of time.

\section{Monogamy of Quantum Entanglement in Time}

Now we are ready to put a fundamental question about monogamy of correlations inheriting from the entangled histories approach. Since the algebra with $\bigodot$ operation is a form of tensor algebra, it inherits all properties
 of normal tensor algebra and all mathematical questions valid for vectors representing spatial correlations are mathematically valid for temporal correlations although not necessarily having similar
  physical interpretation.
It is substantial to note that any history $Y=F_{n}\odot\ldots\odot F_{0}$ can be extended to $I\odot Y$ as identity I represents a property that is always true and does not introduce additional
knowledge about the system.
Conversely, if one considers reduction of a history to smaller number of time frames, then information about the past and future of the reduced history is lost. Let us consider the potential history of the
physical system $Y_{t_{n}\ldots t_{0}}=F_{n}\odot F_{n-1}\odot\ldots\odot F_{2}\odot F_{1}\odot F_{0}$ on times $\{t_{n}\ldots t_{0}\}$, then at time $t_{1}$ the reduced history is $Y_{t_{1}}=F_{1}$.
Further, looking at the history $Y_{t_{n}\ldots t_{0}}$ one can associate with $Y_{t_{1}}$ two bridging operators $\mathcal{B}(t_{2},t_{1})^{\dag}$ and $\mathcal{B}(t_{1},t_{0})$
by means of which we can calculate two propagators taking the history from the future and past events to $F_{1}$ where a system evolves through the potential paths consistent with this history.


The aforementioned considerations are vital for further statements about potential nature of monogamic relations in time for a particular entangled history of a physical system.
We can introduce now a partial trace operation on a history similarly to the spatial operator in accordance with a general rule of calculating partial traces on tensor algebras:
\begin{defn}\label{PartialTrace}
For a history $|Y_{t_{n}\ldots t_{0}})=F_{n}\odot F_{n-1}\odot\ldots\odot F_{1}\odot F_{0}$ acting on a space $\mathcal{H}=\mathcal{H}_{t_{n}}\otimes\dots\otimes\mathcal{H}_{t_{0}}$, a partial trace over times $\{t_{j}\ldots t_{i+1} t_{i}\}$ $(j\geq i)$ is:
\[
Tr_{t_{j}\ldots t_{i+1}t_{i}} |Y_{t_{n}\ldots t_{0}})(Y_{t_{n}\ldots t_{0}}|=\sum_{k=1}^{\dim\mathcal{F}} (e_{k}|Y_{t_{n}\ldots t_{0}})(Y_{t_{n}\ldots t_{0}}|e_{k})
\]
where $\mathcal{F}=\{|e_{k})\}$ creates an orthonormal consistent family of histories on times $\{t_{j}\ldots t_{i+1} t_{i}\}$ and the strong consistency condition for partial histories holds for base histories, i.e. $(e_{i}|e_{j})=Tr[K(|e_{i}))^{\dag}K(|e_{j}))]=\delta_{ij}$.

\end{defn}

\begin{figure}
\includegraphics[width=7cm, height=5cm]{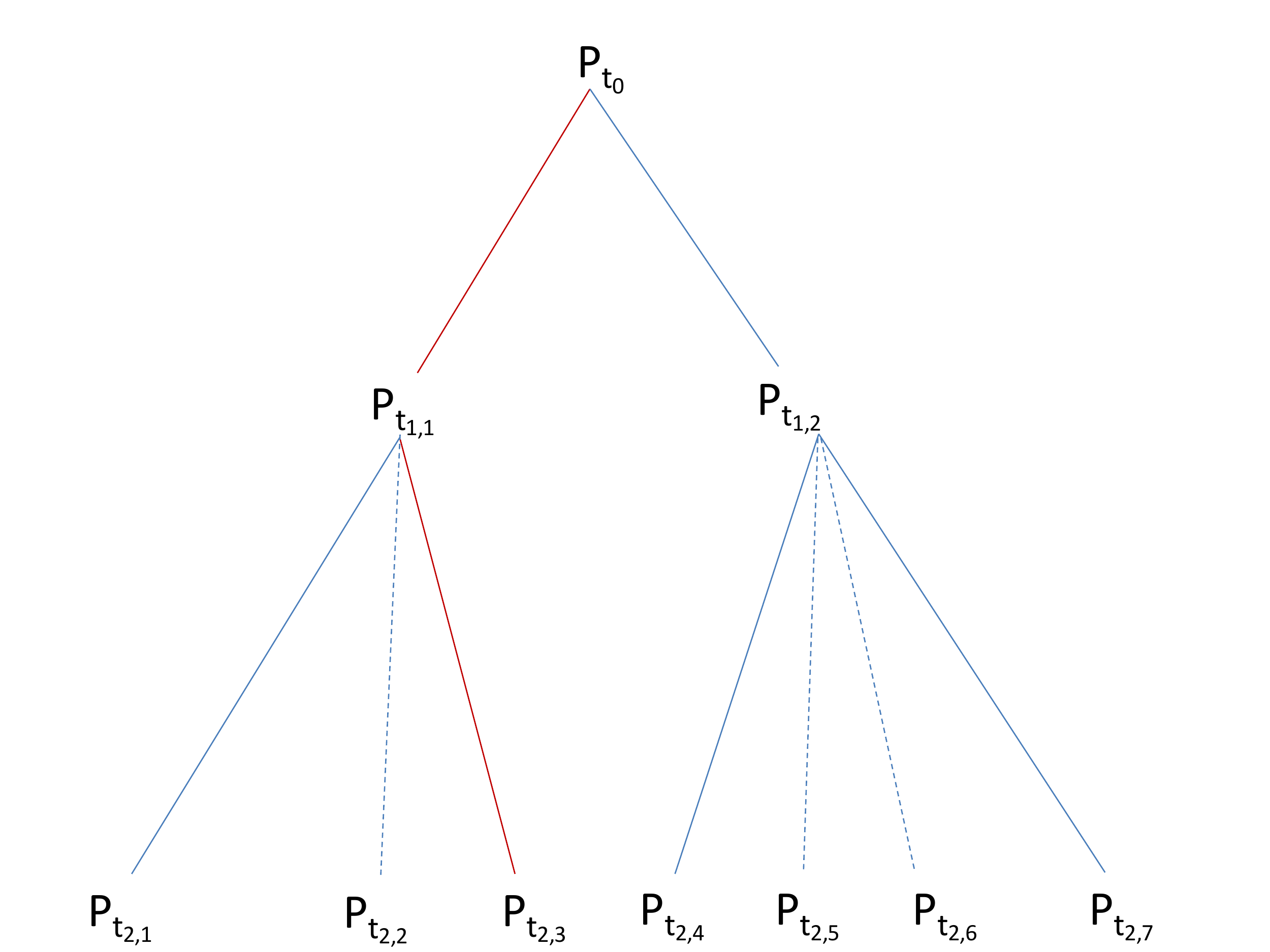}
 \caption[Fig1.]{If the observed evolution is initiated in a state $[P_{t_{0}}]=|\Psi_{0}\rangle\langle\Psi_{0}|$, then the history family can be represented as a tree-like structure where each brach represents a potential history. The branches are mutually orthogonal due to the consistency condition. The exemplary red branch represents history $H=P_{t_{2,3}}\odot P_{t_{1,1}}\odot P_{t_{0}}$.}
 \label{Fig 1}
\end{figure}

Basing on the concept of entangled histories, we propose further a general form of maximally entangled history in similarity to maximally entangled bipartite qudit state in space $|\Psi_{+}\rangle=\frac{1}{\sqrt{N}}\sum_{i=1}^{N}|i\rangle\otimes|i\rangle,\; 2\leq N<\infty$:

\begin{prop}\label{maxent}
A state 'maximally entangled' in time would be represented by:
\begin{equation}
|\Psi)=\frac{1}{\sqrt{N}}\sum_{i=1}^{N}|e_{i})\odot|e_{i}),\; 2\leq N<\infty
\end{equation}
with a trivial bridging operator $I$ and  $\{|e_{i})\}$ creating an orthonormal consistent histories family.
\end{prop}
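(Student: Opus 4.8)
The plan is to verify that the proposed state deserves the label ``maximally entangled in time'' by establishing the three things one expects by analogy with $|\Psi_{+}\rangle$: (i) that $|\Psi)$ is a legitimate, normalized element of the space of (complex combinations of) histories built from the consistent family $\{|e_{i})\}$; (ii) that its reduced description on any single time frame, obtained via the partial trace of Definition \ref{PartialTrace}, is maximally mixed, i.e.\ proportional to the identity on that frame; and (iii) that no nontrivial bridging operator is needed, so that the correlations are carried entirely by the $\odot$-structure and not by the dynamics.

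First I would set up notation: with the trivial bridging operator $I$ the chain operator of a product history $|e_{i})\odot|e_{j})$ reduces to a plain product of projectors, so the inner product $(\,\cdot\,|\,\cdot\,)$ on the two-frame space factorizes across the two copies of $\mathcal{F}$. Using the strong consistency condition $(e_{i}|e_{j})=\delta_{ij}$ assumed in Definition \ref{PartialTrace}, I would compute $(\Psi|\Psi)=\frac{1}{N}\sum_{i,j}(e_{i}|e_{j})(e_{i}|e_{j})=\frac{1}{N}\sum_{i,j}\delta_{ij}=1$, so $|\Psi)$ is normalized. The same orthonormality shows that $\{|e_{i})\odot|e_{j})\}_{i,j}$ is an orthonormal consistent family spanning the two-frame history space, hence $|\Psi)$ is a bona fide history vector of the required form with coefficients $\alpha_{ij}=\frac{1}{\sqrt N}\delta_{ij}$.

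Next I would apply $Tr_{t_{1}}$ (trace over the ``first'' frame, say) to $|\Psi)(\Psi|$ using Definition \ref{PartialTrace} with the basis $\{|e_{k})\}$: one gets $\sum_{k}(e_{k}|\Psi)(\Psi|e_{k}) = \frac{1}{N}\sum_{k}\sum_{i,j}(e_{k}|e_{i})\,|e_{i})(e_{j}|\,(e_{j}|e_{k}) = \frac{1}{N}\sum_{i}|e_{i})(e_{i}|$, and since $\sum_{i}|e_{i})(e_{i}|$ is the identity on the single-frame history space (the family being complete), the reduced history is $\frac{1}{N}\,\mathrm{id}$ — the maximally mixed ``atemporal state'' on that frame, exactly mirroring $\mathrm{Tr}_{B}|\Psi_{+}\rangle\langle\Psi_{+}| = \frac{1}{N}\,\mathbb{1}$ in the spatial case. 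By symmetry the same holds for the other frame, which is the content of the claim; I would also remark that this is what forces the monogamy discussed earlier, since a frame maximally correlated with one partner has no correlation left to share.

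The main obstacle I anticipate is not the algebra but the bookkeeping about what ``consistent family'' and the partial trace mean when the bridging operator is trivial versus when it is not: one must check that taking $\mathcal{B}=I$ is actually consistent (the family $\{|e_{i})\odot|e_{j})\}$ satisfies the consistency condition with the trivial dynamics) rather than merely assumed, and that the partial trace in Definition \ref{PartialTrace} is basis-independent on the relevant subspace so that ``maximally mixed'' is a well-defined statement. A secondary subtlety is interpretational — explaining in what sense a single-time reduced object $\frac{1}{N}\mathrm{id}$ represents a physical state, which the paper itself flags as open — so I would keep the proof to the formal claim (normalization, form, and maximally-mixed reductions) and defer the interpretive discussion to the surrounding text.
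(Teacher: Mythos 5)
The paper itself offers no proof of Proposition \ref{maxent}: it is put forward as a definition by analogy with the spatial state $|\Psi_{+}\rangle$, and the only supporting argument in the surrounding text is the remark that a nontrivial bridging operator could render $|\Psi)$ intrinsically inconsistent (dynamically impossible), which is why the identity bridging operator is attached to the state. Your item (iii) captures exactly that point, and your items (i)--(ii) — normalization and maximally mixed single-frame reductions via Definition \ref{PartialTrace} — go beyond anything the paper does and are a sensible way to earn the adjective ``maximally entangled''; in that sense your route is more substantive than the paper's.

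However, two steps as written do not survive contact with the paper's actual inner product. The paper defines $(\Psi|\Phi)=Tr[K(|\Psi))^{\dagger}K(|\Phi))]$, and with $\mathcal{B}=I$ the chain operator of a product history is an operator \emph{product}, $K(F_{2}\odot F_{1})=F_{2}F_{1}$, so $(A_{2}\odot A_{1}|B_{2}\odot B_{1})=Tr[A_{1}A_{2}B_{2}B_{1}]$, which does not factorize into slotwise inner products as you assert. Your normalization survives because for the diagonal terms $|e_{i})\odot|e_{i})$ with mutually orthogonal rank-one projectors one still gets $\delta_{ij}$; but for general multi-time base histories $K(e_{i})$ need not be a projector and the computation requires more care than ``the inner product factorizes.'' More seriously, the off-diagonal histories $|e_{i})\odot|e_{j})$ with $i\neq j$ have vanishing chain operator under identity dynamics ($e_{i}e_{j}=0$), hence zero norm in the semi-definite inner product; they are therefore \emph{not} an orthonormal consistent family, and your claim that $\{|e_{i})\odot|e_{j})\}_{i,j}$ spans the two-frame history space fails — you should restrict attention to the diagonal family actually appearing in $|\Psi)$. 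Finally, $\sum_{i}|e_{i})(e_{i}|$ is the identity only on the span of the chosen family, so ``maximally mixed'' must be read relative to that subspace. None of this overturns the intended conclusion, but the verification has to be carried out with the chain-operator inner product rather than by importing tensor-product identities wholesale from the spatial case.
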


It is important to note that one can always employ such a bridging operator that $|\Psi)$ could become intrinsically inconsistent which means it would be dynamically impossible \cite{RG3}, thus, an identity bridging operator is associated with the above state.
Further, one could also introduce $\tau GHZ$ and $\tau W$ states substantial for studies of multipartite correlations and their applications (e.g. for secret key generation, quantum algorithms or spin networks) in analogy to spatial $|GHZ\rangle$ and $|W\rangle$ states with trivial bridging operators:
\begin{equation}
 \left\lbrace
           \begin{array}{l}
|\tau GHZ)=  \frac{1}{\sqrt{2}}( |e_{0})^{\odot N}+|e_{1})^{\odot N})\\
|\tau W)= \frac{1}{\sqrt{N}}( |e_{1})\odot|e_{0})\odot\cdots \odot|e_{0}) +\\|e_{0})\odot|e_{1})\odot\cdots \odot|e_{0})+\cdots+|e_{0})\odot|e_{0})\odot\cdots \odot|e_{1}))\\
\end{array}
         \right.
\end{equation}

To better understand the behavior of entangled histories in a context of monogamy, we further consider the Mach-Zender interferometer (Fig.\ref{Fig 2}).

\textit{Example 1.}
Namely, let us consider the following potential intrinsically consistent history on times $\{t_{3}, t_{2}, t_{1}, t_{0}\}$:
\begin{equation}
|\Lambda)=\alpha([\phi_{3,1}]\odot I_{t_{2}}\odot [\phi_{1,1}]+[\phi_{3,2}]\odot I_{t_{2}}\odot [\phi_{1,2}])\odot [\phi_{0}]
\end{equation}
where $\alpha$ stands for the normalization factor, $[\phi_{i,j}]=|\phi_{i,j}\rangle\langle\phi_{i,j}|$ and potentiality of the history means that one can construct a history observable $\widehat{\Lambda}=|\Lambda)(\Lambda|$.
Now, retracting only the state on times $t_{1}$ and $t_{3}$ one gets the reduced history:
\begin{equation}
|\Lambda_{1})=\tilde{\alpha}([\phi_{3,1}]\odot [\phi_{1,1}]+[\phi_{3,2}]\odot [\phi_{1,2}])
\end{equation}
which displays entanglement in time apparently.
Noticeably, we have to show that to be in agreement with the partial trace definition and Feynman propagators' formalism \cite{Feynman} the history $|\Lambda_{1})$ cannot be extracted from
the following $|\tau GHZ)$-like state which is also allowed in the setup of the aforementioned interferometer (Fig. 2) as a potential history:
\begin{equation}
|\Psi)=\gamma([\phi_{3,1}]\odot [\phi_{2,1}]\odot [\phi_{1,1}]+[\phi_{3,2}]\odot [\phi_{2,2}]\odot [\phi_{1,2}])
\end{equation}

We observe that the reduced history $[\phi_{3,1}]\odot [\phi_{1,1}]$ is correlated with $[\phi_{2,1}]$ and not with $[\phi_{2,2}]$. Thus, we cannot simply add the histories $[\phi_{3,1}]\odot [\phi_{1,1}]+[\phi_{3,2}]\odot [\phi_{1,2}]$ as a reduction. It would imply decorrelation with the next instance of the history in such a case, i.e. it could be always expanded to a history e.g. $[\phi_{t_{x}}]\odot([\phi_{3,1}]\odot [\phi_{1,1}]+[\phi_{3,2}]\odot [\phi_{1,2}])$. The latter is
in agreement with the Feynman's addition rule for probability amplitudes as that would mean existence of detectors in the consecutive step. It is important
to note that these considerations are related to $|\Psi)(\Psi|$ - observable and the reference frame associated with that. It shows clearly a physical sense
of quantum entanglement in time and further a concept of its monogamy for a particular entangled history.

One finds temporal monogamy phenomenon in similarity to the spatial monogamy \cite{Wootters} on the ground of consistent histories approach saying that we cannot build a tripartite state $\rho_{ABC}$ where
$\rho_{AB}=\rho_{BC}=|\Psi )( \Psi|$ and $Tr_{C}\rho_{ABC}=\rho_{AB}$.

\begin{figure}
\includegraphics[width=7cm, height=5cm]{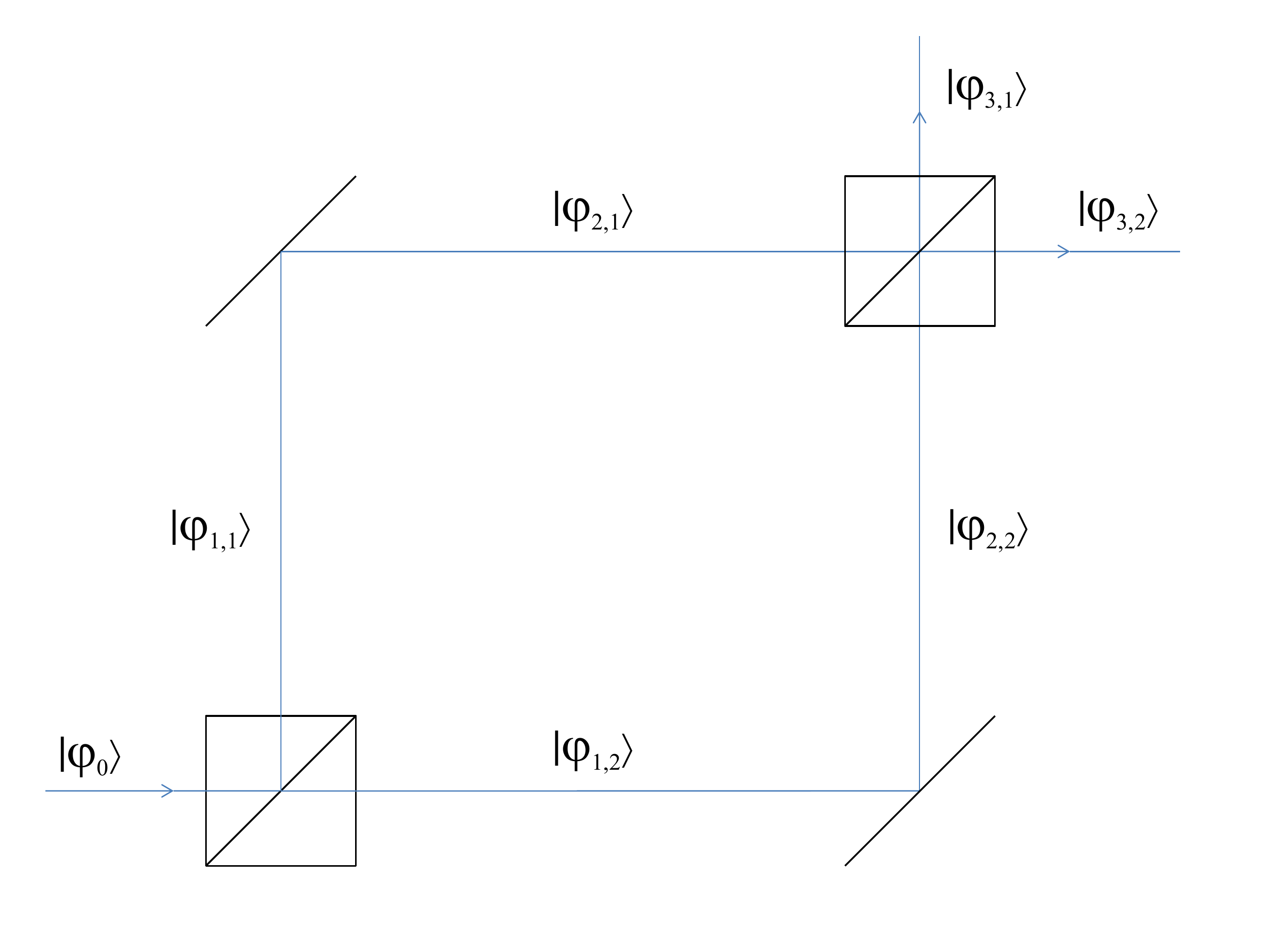}
 \caption[Fig2.]{The Mach-Zender interferometer with an input state $|\phi_{0}\rangle$ - a vacuum state is omitted which does not change further considerations. The beam-splitters can be represented by Hadamard operation
 $H=\frac{1}{\sqrt{2}} \left[
            \begin{array}{cc}
              1 & 1 \\
              1 & -1 \\
            \end{array}
          \right]$ acting on the spatial modes.
 One can analyze the interferometer via four-times histories on times $t_{0}<t_{1}<t_{2}<t_{3}$ for the interferometer process: $|\phi_{0}\rangle \rightarrow \frac{1}{\sqrt{2}}(|\phi_{1,1}\rangle+|\phi_{1,2}\rangle)\rightarrow \frac{1}{\sqrt{2}}(|\phi_{2,1}\rangle+|\phi_{2,2}\rangle)\rightarrow |\phi_{3,2}\rangle$.}
  \label{Fig 2}
\end{figure}

Besides the aforementioned reasoning derived from Feynman's quantum paths, one can refer to a broadly used explanation \cite{Wootters} for spatial monogamy of entanglement
between parties ABC (or further $\{t_{3}, t_{2}, t_{1}, t_{0}\}$ for temporal correlations). It states that A cannot be simultaneously fully entangled with B and C since then AB would be entangled with C having a mixed density
matrix that contradicts purity of the singled state shared between A and B. For the history spaces one can build naturally $\mathcal{C}^{*}$-Algebra of history operators equipped with a partial trace operation \ref{PartialTrace} and follow the same reasoning for entangled histories.
We can summary these considerations with the following corollary about monogamy of temporal entangled histories:

\begin{cor}
There does not exist  any such a history $|H)\in Proj(\mathcal{H}^{\otimes n})$ so that for three chosen times $\{t_{3},t_{2},t_{1}\}$ one can find reduced histories $|\Psi_{t_{3}t_{2}})=\frac{1}{\sqrt{2}}(|e_{0})\odot|e_{0})+|e_{1})\odot|e_{1}))$ and $|\Psi_{t_{2}t_{1}})=\frac{1}{\sqrt{2}}(|e_{0})\odot|e_{0})+|e_{1})\odot|e_{1}))$.
\end{cor}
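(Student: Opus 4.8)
The plan is to mirror the standard spatial monogamy argument within the $\mathcal{C}^{*}$-algebra of history operators, using the partial-trace operation of Definition~\ref{PartialTrace} and the observation (Proposition~\ref{maxent}) that the target reduced histories are maximally entangled temporal states with trivial bridging operator. First I would suppose, for contradiction, that such a history $|H)\in Proj(\mathcal{H}^{\otimes n})$ exists, and reduce to the three-frame case by tracing out all times other than $\{t_{3},t_{2},t_{1}\}$; call the resulting (normalized) reduced history $\rho_{t_{3}t_{2}t_{1}}=\mathrm{Tr}_{\text{rest}}|H)(H|$. The hypothesis then says $\mathrm{Tr}_{t_{1}}\rho_{t_{3}t_{2}t_{1}}=|\Psi_{t_{3}t_{2}})(\Psi_{t_{3}t_{2}}|$ and $\mathrm{Tr}_{t_{3}}\rho_{t_{3}t_{2}t_{1}}=|\Psi_{t_{2}t_{1}})(\Psi_{t_{2}t_{1}}|$, both rank-one projectors onto maximally entangled temporal states.

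Next I would invoke the key structural fact, which must be established first: a state of the three-frame history algebra whose reduction onto $\{t_{3},t_{2}\}$ is the \emph{pure} history $|\Psi_{t_{3}t_{2}})$ must itself factor as $|\Psi_{t_{3}t_{2}})\otimes\sigma_{t_{1}}$ for some state $\sigma_{t_{1}}$ on the $t_{1}$-frame — exactly the ``a pure marginal forces a product state'' lemma, transported to the temporal tensor algebra via the inner product $(\Psi|\Phi)=\mathrm{Tr}[K(|\Psi))^{\dagger}K(|\Phi))]$. Granting this, $\rho_{t_{3}t_{2}t_{1}}=|\Psi_{t_{3}t_{2}})(\Psi_{t_{3}t_{2}}|\odot\sigma_{t_{1}}$, and tracing out $t_{3}$ gives $\mathrm{Tr}_{t_{3}}\rho_{t_{3}t_{2}t_{1}}=\big(\mathrm{Tr}_{t_{3}}|\Psi_{t_{3}t_{2}})(\Psi_{t_{3}t_{2}}|\big)\odot\sigma_{t_{1}}=\tfrac12 I_{t_{2}}\odot\sigma_{t_{1}}$, since the single-frame reduction of a maximally entangled temporal state is the maximally mixed frame state. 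This is a product (hence separable) history on $\{t_{2},t_{1}\}$, contradicting the required equality with $|\Psi_{t_{2}t_{1}})(\Psi_{t_{2}t_{1}}|$, which is entangled in time — e.g. its two single-frame reductions are both maximally mixed while a product state has at least one pure reduction, or one simply computes $(\Psi_{t_{2}t_{1}}|\,\tfrac12 I_{t_{2}}\odot\sigma_{t_{1}}\,|\Psi_{t_{2}t_{1}})=\tfrac12\,\mathrm{Tr}\,\sigma_{t_{1}}/N<1$. Either way the purity obstruction closes the argument.

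The main obstacle I anticipate is precisely the transported ``pure-marginal-implies-product'' lemma: in the temporal setting the tensor factors are coupled through the bridging operators $\mathcal{B}(t_{i+1},t_{i})$ inside the chain operator $K$, so purity of a reduced history and the Schmidt-type decomposition behind the spatial proof are not automatic — one must check that the partial trace of Definition~\ref{PartialTrace} genuinely behaves like a CPTP map on the history algebra with the stated inner product, and that the consistency conditions do not obstruct the factorization. Because Proposition~\ref{maxent} fixes the bridging operator of the maximally entangled target states to be trivial, I expect this coupling to drop out for the relevant reductions, making the spatial argument go through verbatim; but verifying that the ambient history $|H)$ can be assumed (or massaged) to have compatible bridging operators on the frames $\{t_{3},t_{2},t_{1}\}$ is the step that needs genuine care, and it is where the ``richer tensor structure of temporal correlations'' flagged in the introduction could in principle bite. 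A secondary, more routine point is to confirm that $\rho_{t_{3}t_{2}t_{1}}$ is a legitimate positive, unit-weight history operator so that the notions of purity and separability used above are well defined — this should follow directly from the $\mathcal{C}^{*}$-algebra structure and the properties of the partial trace once the lemma is in hand.
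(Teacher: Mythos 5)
Your proposal follows essentially the same route as the paper: the paper's own justification for the corollary is precisely the Wootters-style purity argument (a party cannot be maximally entangled with two others, since purity of the shared maximally entangled state forces its marginal to be product with the third) transported to the $\mathcal{C}^{*}$-algebra of history operators via the partial trace of Definition~\ref{PartialTrace}, with the Mach--Zender/Feynman path-integral discussion serving only as supporting intuition. If anything, your write-up is more explicit than the paper's, which never isolates the ``pure-marginal-implies-product'' lemma nor addresses the bridging-operator compatibility issue you rightly flag (the paper sidesteps it by fixing trivial bridging operators in the maximally entangled histories).
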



This lemma holds for any finite dimension $n$ and also for general entangled states of the form \ref{maxent}.

As a consequence, there does not exist such a temporal observable $\widehat{\Lambda}_{A_{1}A_{2}A_{3}}$ so that $A_{1}A_{2}$ parties are maximally entangled and $A_{2}A_{3}$ are maximally entangled simultaneously on times $\{t_{3}, t_{2}, t_{1}\}$. However, in principle there exist
observables of different histories that do not commute and cannot be observed at the same reference frame by an observer that are maximally entangled between $A_{1}A_{2}$ and $A_{2}A_{3}$ \cite{MNPH3}.


It might be also interesting to consider a temporal analogy of spatially separable states. A natural consequence of entanglement monogamy in space is that all $\infty$-extendible states \cite{D2, T1, MN3}
are separable, i.e. we cannot build a quantum spatial state where a chosen party is entangled with an infinite number of parties. In principle, if we consider now Feynmann path integral which integrates all probability amplitudes over possible paths, one can state a question about correlations between a state of a system at a chosen time $t_{x}$ and all other times separated by $dt$ in this
evolution. Suppose we are considering a two-state history $|F)=[x_{E}]\odot [x_{S}]$ where a particle is localized at $x_{S}$ at time $t_{S}$ (our initial state is $|S\rangle=|x_{S}\rangle$) and evolves to the final state $|E\rangle$ localized at $x_{E}$ at time $t_{E}$. This history can be further expanded
as a Feynman path integral \cite{Schwartz} assuming breaking down the evolution time into $n$ small time intervals $\delta t$:
\begin{equation}
\begin{split}
\langle E|S\rangle = \int dx_{n}\cdots dx_{1} \langle x_{E}| e^{-iH(t_{E})\delta t}|x_{n}\rangle\langle x_{n}|\cdots \\
 |x_{2}\rangle\langle x_{2}| e^{-iH(t_{2})\delta t}|x_{1}\rangle\langle x_{1}|e^{-iH(t_{1})\delta t} |x_{S}\rangle
\end{split}
\end{equation}
where we assumed evolution steered by hamiltonian $H$ being a smooth function of $t$.
One can represent the product $|E\rangle\langle E|S\rangle\langle S|$ by means of integration over histories (it is crucial to remember that as in case of path integral summands not every
particular history summand has to be inherently consistent, i.e. physically realizable) as follows:
\begin{equation}
\begin{split}
|E\rangle\langle E|S\rangle\langle S|=\int dx_{n}\cdots dx_{1} K([x_{E}]\odot[x_{n}]\odot\ldots \\
\odot[x_{2}]\odot[x_{1}]\odot[x_{S}])
\end{split}
\end{equation}
It represents an expansion of a quantum propagator via quantum histories and
in general, the history $|F)=[x_{E}]\odot [x_{S}]$ is separable.

\section{Legett-Garg Inequalities and Tsirelson's Bound from Entangled Histories}

The violation of local realism (LR) \cite{Bell} and macrorealism (MR) \cite{MRealism} by quantum theories has been studied for many years in experimental setups where measurements' data are tested against violation of Bell inequalities for LR and Leggett-Garg inequalities (LGI) \cite{LGI} for MR. For quantum theories, the former raises as a consequence of non-classical correlations in space while the latter as a consequence of non-classicality of dynamic
evolution. In this section we show that entangled histories approach gives the same well-known Tsirelson bound \cite{Tsirelson} on quantum correlations for LGI as quantum entangled states in case of bi-partite spatial correlations for CHSH-inequalities.

In temporal version of CHSH-inequality being a modification of original Legett-Garg inequalities, Alice performs measurement at time $t_{1}$ choosing between two dichotomic observables $\{A_{1}^{(1)}, A_{2}^{(1)}\}$ and then Bob performs a measurement at time $t_{2}$ choosing between $\{B_{1}^{(2)}, B_{2}^{(2)}\}$. Therefore, the structure of this LGI can be represented as follows \cite{Vedral}:
\begin{equation}
S_{LGI}\equiv c_{12}+c_{21}+c_{11}-c_{22} \leq 2
\end{equation}
where $c_{ij}=\langle A_{i}^{(1)}, B_{j}^{(2)}  \rangle$ stands for the expectation value of consecutive measurements performed at time $t_{1}$ and $t_{2}$.

Since one can build in a natural way $\mathcal{C}^{*}$-Algebra of history operators for normalized histories from projective Hilbert spaces equipped with a well-defined inner product, we provide reasoning about bounding the LGI purely on the space of entangled histories and achieve the quantum bound $2\sqrt 2$ of CHSH-inequality specific for spatial correlations. The importance of this result achieved analytically is due to the fact that
previously it was derived basing on convex optimization methods by means of semi-definite programming \cite{Budroni} and by means of correlator spaces \cite{Fritz} not being equivalent to probability space (probability conditional distributions
of consecutive events) without underpinning mathematical structure of quantum temporal states.

We will now remind the theory by B.S. Cirel'son about bounds on Bell's inequalities that is used for finding quantum bounds on spatial Bell-inequalities:

\begin{thm}\cite{Tsirelson}
The following  conditions are equivalent for real numbers $c_{kl}$, $k=1,\ldots, m$, $l=1,\ldots, n$:
\\
\\
 $1.$ There exists $\mathcal{C}^{*}$-Algebra $\mathcal{A}$ with identity, Hermitian operators $A_{1},\ldots, A_{m}, B_{1},\ldots, B_{n} \in \mathcal{A}$ and a state $f$ on $\mathcal{A}$ so that for every ${k,l}$:\\
 \begin{equation}
 A_{k}B_{l}=B_{l}A_{k}; \; \mathbb{I}\leq A_{k}\leq \mathbb{I}; \; \mathbb{I}\leq B_{l}\leq \mathbb{I}; \; f(A_{k}B_{l})=c_{kl}.
\end{equation}
 $2. $ There exists a density matrix $W$ such that for every $k,l$:
 \begin{equation}
Tr(A_{k}B_{l}W)=c_{kl} \; and \; A_{k}^{2}=\mathbb{I};\; B_{l}^{2}=\mathbb{I}.
\end{equation}
$3. $ There are unit vectors $x_{1}, \ldots, x_{m}, y_{1},\ldots, y_{n}$ in a $(m+n)$-dimensional Euclidean space such that:
\begin{equation}
\langle x_{k},y_{l} \rangle = c_{kl}.
\end{equation}
\end{thm}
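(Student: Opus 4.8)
The plan is to prove the theorem by establishing the cycle of implications $2\Rightarrow 1\Rightarrow 3\Rightarrow 2$, which yields the stated equivalence. The implication $2\Rightarrow 1$ is immediate: given a density matrix $W$ on a Hilbert space $\mathcal{H}$, let $\mathcal{A}$ be the algebra of bounded operators on $\mathcal{H}$ and $f(X)=\mathrm{Tr}(XW)$, which is a state; since each $A_{k}$ is Hermitian with $A_{k}^{2}=\mathbb{I}$ its spectrum lies in $\{-1,+1\}$, whence $-\mathbb{I}\leq A_{k}\leq\mathbb{I}$ and likewise for the $B_{l}$, while the relations $A_{k}B_{l}=B_{l}A_{k}$ and $f(A_{k}B_{l})=c_{kl}$ carry over verbatim.

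For $1\Rightarrow 3$ I would invoke the GNS construction for the pair $(\mathcal{A},f)$, producing a Hilbert space $\mathcal{K}$, a $*$-representation $\pi$, and a cyclic unit vector $\Omega$ with $f(X)=\langle\Omega,\pi(X)\Omega\rangle$. Writing $A_{k},B_{l}$ again for $\pi(A_{k}),\pi(B_{l})$ and setting $u_{k}=A_{k}\Omega$, $v_{l}=B_{l}\Omega$, Hermiticity gives $\langle u_{k},v_{l}\rangle=\langle\Omega,A_{k}B_{l}\Omega\rangle=c_{kl}$, which is real, and $\|u_{k}\|^{2}=\langle\Omega,A_{k}^{2}\Omega\rangle\leq 1$, similarly $\|v_{l}\|\leq 1$. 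Since the $c_{kl}$ are real one passes to the realification of $\mathcal{K}$ without changing norms or these inner products, and then promotes the $u_{k},v_{l}$ to unit vectors by padding: in $\mathcal{K}_{\mathbb{R}}\oplus\mathbb{R}^{m}\oplus\mathbb{R}^{n}$ set $x_{k}=u_{k}\oplus\sqrt{1-\|u_{k}\|^{2}}\,e_{k}\oplus 0$ and $y_{l}=v_{l}\oplus 0\oplus\sqrt{1-\|v_{l}\|^{2}}\,f_{l}$ with $\{e_{k}\},\{f_{l}\}$ orthonormal. The extra coordinates of the $x$'s are orthogonal to those of the $y$'s, so $\langle x_{k},y_{l}\rangle=c_{kl}$ while $\|x_{k}\|=\|y_{l}\|=1$; restricting to the real span of all these vectors (dimension at most $m+n$) and embedding it isometrically into $\mathbb{R}^{m+n}$ yields condition $3$.

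For $3\Rightarrow 2$ I would build the representation explicitly from a Clifford algebra. Fix Hermitian matrices $\gamma_{1},\dots,\gamma_{m+n}$ on a finite-dimensional space $\mathcal{H}_{0}$ of dimension $d$ with $\gamma_{i}\gamma_{j}+\gamma_{j}\gamma_{i}=2\delta_{ij}\mathbb{I}$ (a Jordan--Wigner representation does the job), and for $x\in\mathbb{R}^{m+n}$ put $\Gamma(x)=\sum_{i}x^{(i)}\gamma_{i}$; then $\Gamma(x)$ is Hermitian, $\Gamma(x)^{2}=\|x\|^{2}\mathbb{I}$, and $\mathrm{Tr}\,\Gamma(x)\Gamma(y)=d\,\langle x,y\rangle$ because $\mathrm{Tr}(\gamma_{i}\gamma_{j})=d\,\delta_{ij}$. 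On $\mathcal{H}_{0}\otimes\mathcal{H}_{0}$ set $A_{k}=\Gamma(x_{k})\otimes\mathbb{I}$ and $B_{l}=\mathbb{I}\otimes\overline{\Gamma(y_{l})}$; these are Hermitian, mutually commuting, and satisfy $A_{k}^{2}=B_{l}^{2}=\mathbb{I}$ since $\|x_{k}\|=\|y_{l}\|=1$. Finally take $W=|\Phi\rangle\langle\Phi|$ with $|\Phi\rangle=d^{-1/2}\sum_{i}|i\rangle\otimes|i\rangle$; from $\langle\Phi|M\otimes N|\Phi\rangle=d^{-1}\mathrm{Tr}(MN^{T})$ and $\overline{\Gamma(y_{l})}^{\,T}=\Gamma(y_{l})$ one gets $\mathrm{Tr}(A_{k}B_{l}W)=d^{-1}\mathrm{Tr}\,\Gamma(x_{k})\Gamma(y_{l})=\langle x_{k},y_{l}\rangle=c_{kl}$, which is condition $2$ and closes the cycle.

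The step I expect to be the main obstacle is $3\Rightarrow 2$: the non-obvious idea is that anticommuting Clifford generators convert arbitrary unit vectors into dichotomic ($\pm1$) observables, and that the maximally entangled state turns the operator product $A_{k}B_{l}$ into the Euclidean inner product $\langle x_{k},y_{l}\rangle$; once this is in place the rest is bookkeeping. The remaining delicate (but routine) point is the realification and unit-norm padding in $1\Rightarrow 3$, where one must check that enlarging the Hilbert space leaves every cross inner product $c_{kl}$ untouched. With this theorem in hand, the bound $2\sqrt{2}$ on $S_{LGI}$ then follows exactly as in the spatial CHSH case, by maximizing $c_{12}+c_{21}+c_{11}-c_{22}$ over unit vectors $x_{1},x_{2},y_{1},y_{2}$ in Euclidean space.
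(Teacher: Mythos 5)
Your proof is correct, but there is nothing in the paper to compare it against: the theorem is quoted verbatim from Cirel'son's 1980 paper and is used without proof (the paper only invokes it, together with the operator inequality $A_{1}B_{1}+A_{1}B_{2}+A_{2}B_{1}-A_{2}B_{2}\leq\frac{1}{\sqrt{2}}(A_{1}^{2}+A_{2}^{2}+B_{1}^{2}+B_{2}^{2})$, to bound $S_{LGI}$). What you have reconstructed is essentially Tsirelson's original argument: the cycle $2\Rightarrow 1\Rightarrow 3\Rightarrow 2$, with the GNS construction plus realification and unit-norm padding for $1\Rightarrow 3$, and the Clifford-algebra representation $x\mapsto\Gamma(x)$ with $\Gamma(x)^{2}=\|x\|^{2}\mathbb{I}$ together with the maximally entangled state turning $\mathrm{Tr}(A_{k}B_{l}W)$ into $\langle x_{k},y_{l}\rangle$ for $3\Rightarrow 2$; all the individual computations ($\mathrm{Tr}(\gamma_{i}\gamma_{j})=d\,\delta_{ij}$, $\langle\Phi|M\otimes N|\Phi\rangle=d^{-1}\mathrm{Tr}(MN^{T})$, $\overline{\Gamma(y_{l})}^{\,T}=\Gamma(y_{l})$) check out. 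Two small remarks: the paper's transcription of the theorem is defective --- the inequalities should read $-\mathbb{I}\leq A_{k}\leq\mathbb{I}$, $-\mathbb{I}\leq B_{l}\leq\mathbb{I}$, and condition $2$ omits the requirement that the $A_{k}$ and $B_{l}$ commute (or act on distinct tensor factors), without which $2$ as literally written is not equivalent to the others; you silently and correctly restore both, and your $3\Rightarrow 2$ construction indeed delivers the commuting tensor-product form. It would be worth stating that repair explicitly, since your $2\Rightarrow 1$ step ("the relations carry over verbatim") relies on it.
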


In a temporal setup one considers measurements $\mathbb{A}=I \odot \mathbb{A}^{(1)}$ (measurement $\mathbb{A}$ occurring at time $t_{1}$) and $\mathbb{B}=\mathbb{B}^{(2)}\odot I$ which is in an exact analogy to the proof of the above theorem for a spatial setup. The history with 'injected' measurements can be represented as $|\widetilde{H})=\alpha \mathbb{A}\mathbb{B}|H)\mathbb{A}^{\dagger}\mathbb{B}^{\dagger}$ where $\alpha$ stands for a normalization factor.
History observables are history state operators which are naturally Hermitian and their eigenvectors can generate a consistent history family\cite{WC1}. For an exemplary observable $A=\sum_{i}a_{i}|H_{i})(H_{i}|$, its measurement on a history $|H)$ generates an expectation value $\langle A\rangle=Tr(A|H)(H|)$ (i.e. the result $a_{i}$ is achieved with probability $|(H|H_{i})|^{2}$) in analogy to the spatial case. Thus, one achieves history $|\widetilde{H})$ as a realized
history with measurements and the expectation value of the history observable $\langle A \rangle$. It is worth mentioning that $|\widetilde{H})$ and $|H)$ are both compatible histories, i.e. related by a linear transformation. Equipped with the aforementioned findings about history observables, one can state now the following lemma:
\begin{lem}
For any history density matrix $W$ and Hermitian history dichotomic observables $A_{i}=I\odot A_{i}^{(1)}$ and $B_{j}= B_{j}^{(2)} \odot I$ where $i,j \in \{1,2\}$ the following bound holds:
\begin{eqnarray}
S_{LGI}&=&c_{11}+c_{12}+c_{21}-c_{22}\\
&=&Tr((A_{1}B_{1}+A_{1}B_{2}+A_{2}B_{1}-A_{2}B_{2})W)\nonumber \\
&\leq& 2\sqrt{2}
\nonumber
\end{eqnarray}
\end{lem}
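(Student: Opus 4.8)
The plan is to reduce the inequality to the $\mathcal{C}^{*}$-algebraic form of Tsirelson's theorem quoted above, now applied inside the $\mathcal{C}^{*}$-Algebra of history operators rather than to a spatial tensor product. First I would verify that the injected-measurement observables $A_{i}=I\odot A_{i}^{(1)}$ and $B_{j}=B_{j}^{(2)}\odot I$ fulfil the hypotheses of part $2$ of that theorem within this algebra: they are Hermitian (history observables are Hermitian by construction), they are dichotomic so that $A_{i}^{2}=I$ and $B_{j}^{2}=I$ (the measured values are $\pm 1$, hence the squares act as the identity on the relevant temporal slot), and---this is the key point---$A_{i}$ commutes with $B_{j}$, since $A_{i}$ is supported on the time-$t_{1}$ factor and $B_{j}$ on the time-$t_{2}$ factor of the $\odot$-product, so $[A_{i},B_{j}]=0$ exactly as disjoint legs of an ordinary tensor product commute. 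Because $W$ is a history density matrix, $X\mapsto Tr(XW)$ is a state on the history algebra, and $Tr(A_{i}B_{j}W)=c_{ij}$ by the very definition of the temporal correlators.

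Having checked hypothesis $2$, Tsirelson's theorem yields hypothesis $3$: there are unit vectors $x_{1},x_{2},y_{1},y_{2}$ in a Euclidean space with $\langle x_{i},y_{j}\rangle=c_{ij}$. Then $S_{LGI}=\langle x_{1},y_{1}+y_{2}\rangle+\langle x_{2},y_{1}-y_{2}\rangle$, and Cauchy--Schwarz together with $\|x_{i}\|=\|y_{j}\|=1$ gives $S_{LGI}\leq\|y_{1}+y_{2}\|+\|y_{1}-y_{2}\|\leq\sqrt{2}\,\bigl(\|y_{1}+y_{2}\|^{2}+\|y_{1}-y_{2}\|^{2}\bigr)^{1/2}=\sqrt{2}\,\bigl(2\|y_{1}\|^{2}+2\|y_{2}\|^{2}\bigr)^{1/2}=2\sqrt{2}$, which is the claimed bound.

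As an alternative staying entirely inside the history algebra, I would set $\mathcal{C}=A_{1}B_{1}+A_{1}B_{2}+A_{2}B_{1}-A_{2}B_{2}=A_{1}(B_{1}+B_{2})+A_{2}(B_{1}-B_{2})$ and compute $\mathcal{C}^{2}$ directly. Using $A_{i}^{2}=B_{j}^{2}=I$ and $[A_{i},B_{j}]=0$, the diagonal terms sum to $4I$ and the cross terms collapse to the identity $\mathcal{C}^{2}=4I-[A_{1},A_{2}][B_{1},B_{2}]$; since $\|A_{i}\|=\|B_{j}\|=1$ one has $\|[A_{1},A_{2}]\|\leq 2$ and $\|[B_{1},B_{2}]\|\leq 2$, hence $\|\mathcal{C}^{2}\|\leq 8$ and, $\mathcal{C}$ being Hermitian, $\|\mathcal{C}\|\leq 2\sqrt{2}$. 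The bound then follows from $S_{LGI}=Tr(\mathcal{C}W)\leq\|\mathcal{C}\|_{\infty}\|W\|_{1}=2\sqrt{2}$.

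I expect the only genuine obstacle to lie in the first step: one must make sure that the $\odot$-algebra of histories with injected measurements really is a $\mathcal{C}^{*}$-algebra in which operators attached to distinct time slots commute, in which the dichotomic relation $A_{i}^{2}=I$ is meaningful at the level of history operators, and in which $Tr(\,\cdot\,W)$ is a bona fide state compatible with the chain-operator and bridging-operator machinery. Once this compatibility is granted, the remainder is exactly the standard Tsirelson argument and transfers verbatim from the spatial CHSH setting.
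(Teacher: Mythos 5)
Your proposal is correct, and it is more self-contained than the paper's own argument, though both are anchored in the same structural observation: inside the $\mathcal{C}^{*}$-algebra of history operators the observables $A_{i}=I\odot A_{i}^{(1)}$ and $B_{j}=B_{j}^{(2)}\odot I$ commute (they occupy disjoint temporal slots of the $\odot$-product), square to the identity, and $Tr(\cdot\,W)$ is a state, so the spatial Tsirelson machinery transfers. The paper then disposes of the bound in two terse strokes: an appeal to the spin-component interpretation of dichotomic $2\times 2$ observables ("it is well known that the inequality saturates at $2\sqrt{2}$"), and the quoted operator inequality $A_{1}B_{1}+A_{1}B_{2}+A_{2}B_{1}-A_{2}B_{2}\leq\frac{1}{\sqrt{2}}(A_{1}^{2}+A_{2}^{2}+B_{1}^{2}+B_{2}^{2})\leq 2\sqrt{2}\,I$, i.e.\ a sum-of-squares bound imported from Tsirelson. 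You instead (i) run the theorem's implication from its density-matrix form to the Euclidean-vector form and finish with Cauchy--Schwarz plus the parallelogram law, and (ii) give the Landau-type identity $\mathcal{C}^{2}=4I-[A_{1},A_{2}][B_{1},B_{2}]$ with commutator norm bounds, yielding $\|\mathcal{C}\|\leq 2\sqrt{2}$ and then $Tr(\mathcal{C}W)\leq\|\mathcal{C}\|_{\infty}\|W\|_{1}$. Both of your routes are valid (your cross-term algebra checks out) and buy an explicit, verifiable derivation where the paper leans on citation; the paper's sum-of-squares inequality buys brevity and stays at the level of a single operator estimate. The caveat you flag at the end is indeed the only genuine issue, and it is shared by the paper: the physical identification of the sequential-measurement correlators $c_{ij}$ with $Tr(A_{i}B_{j}W)$ for commuting history observables, compatibly with the chain/bridging-operator formalism, is asserted rather than derived in both treatments, so your proof matches the paper's level of rigor on that point while exceeding it on the operator-norm estimate itself.
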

\begin{proof}
The proof of this observation can be performed in similar to the spatial version of CHSH-Bell inequality under assumption that the states are represented by
entangled history states and for two possible measurements $\{A_{1}^{(1)}, A_{2}^{(1)}\}$ at time $t_{1}$ and two measurements $\{B_{1}^{(1)}, B_{2}^{(1)}\}$ at time $t_{2}$. These operators can be of dimension $2\times2$ meeting the condition $A_{i}^{2}=B_{j}^{2}=I$. Therefore, they can be interpreted as spin components along two different directions. In consequence, it is well-known that the above inequality is saturated for $2\sqrt{2}$ for a linear combination of tensor spin correlation that holds also for temporal correlations.
Additionally, one could also apply for this temporal inequality reasoning based on the following obvious finding \cite{Tsirelson} that holds also for the temporal scenario due to the structure of $\mathcal{C}^{*}$-Algebra of history operators with $\odot$-tensor operation:
\begin{eqnarray}
A_{1}B_{1}+A_{1}B_{2}+A_{2}B_{1}-A_{2}B_{2}&\leq&\\
\frac{1}{\sqrt{2}}(A_{1}^2+A_{2}^2+B_{1}^2+B_{2}^2)&\leq&
2\sqrt{2}I \nonumber
\end{eqnarray}

\end{proof}




\section{Conclusions}

In this paper we presented monogamous properties of quantum entangled histories proving that
quantum entanglement in time has properties similar to quantum entanglement in space. We pointed out that
a Tsirelson-like bound can be calculated for Leggett-Garg inequalities analytically applying entangled histories which
is a new result in comparison to the limits calculated numerically by means of semi-definite programming. We also introduced a partial trace operation for entangled histories which is operationally important for analysis of reduced histories.
There are many open problems and questions for further research in this field. Entangled histories approach is a substantial modification of the original consistent histories approach, especially in relation to the entanglement in time introducing non-locality of time into the framework. Future research can be focused on analysis of non-locality in time and finding more appropriate mathematical structures that will enable easier calculations of measurements' outputs for observers in different reference frames. Monogamy of entanglement in time and non-locality in time can be probably applied also in quantum cryptography and should give some new insights into non-sequential quantum algorithms and information processing. Finally, as stated in the paper the subject is fundamental for understanding relativistic quantum information theory and brings new prospects for this field.


\section{Acknowledgments}
The author would like to thank Robert Griffiths for answering many questions related to his theory. Acknowledgments to Pawel Horodecki for critical comments and discussions on this paper. This work was supported by by the ERC
grant QOLAPS. Part of this work was performed at the National Quantum Information Center in Gdansk.

\end{document}